\documentclass[11pt]{article}

\usepackage{color}
\definecolor{ForestGreen}{rgb}{0.1333,0.5451,0.1333}
\usepackage[colorlinks=true,linkcolor=blue,citecolor=ForestGreen]{hyperref}
\usepackage{amsmath, amssymb, amsthm}
\usepackage{mathtools}
\usepackage[noabbrev,capitalize,nameinlink]{cleveref}
\crefname{equation}{}{}
\usepackage{fullpage}
\usepackage{graphics}
\usepackage{pifont}
\usepackage{tikz}
\usepackage{bbm}
\usepackage[T1]{fontenc}

\usetikzlibrary{arrows.meta}

\usepackage{environ}
\usepackage{framed}
\usepackage{url}
\usepackage[lined,boxed,ruled,norelsize,algo2e,linesnumbered,noend]{algorithm2e}
\usepackage[noend]{algpseudocode}
\usepackage[labelfont=bf]{caption}
\usepackage{cite}
\usepackage{framed}
\usepackage[framemethod=tikz]{mdframed}
\usepackage{appendix}
\usepackage{graphicx}
\usepackage[textsize=tiny]{todonotes}
\usepackage{tcolorbox}
\allowdisplaybreaks[1]

\newcommand\remove[1]{}

\newtheorem{theorem}{Theorem}
\newtheorem{lemma}{Lemma}[section]
\newtheorem*{lemma*}{Lemma}
\newtheorem{corollary}[lemma]{Corollary}
\newtheorem*{corollary*}{Corollary}

\theoremstyle{definition}

\newtheorem*{theorem*}{Theorem}
\newtheorem{definition}[lemma]{Definition}

\newtheorem*{rem*}{Remark}

\newcommand\F{\mathbb{F}}

\newcommand{\eps}{\varepsilon}
\renewcommand{\O}{\widetilde{O}}

\newcommand{\bs}{\backslash}

\newcommand{\assign}{\leftarrow}

\renewcommand{\forall}{\mathrm{\text{ for all }}}

\newcommand{\T}{\mathcal{T}}

\newcommand{\mincut}{\mathrm{mincut}}

\newcommand{\new}{\mathrm{new}}
\newcommand{\I}{\mathcal{I}}
\newcommand{\M}{\mathcal{M}}
\newcommand{\lovasz}{Lov\'{a}sz}
\newcommand{\J}{\mathcal{J}}
\renewcommand{\sc}{\mathrm{sc}}
\newcommand{\Existence}{\textsc{Existence}}
\newcommand{\PolyTimeNetwork}{\textsc{PolyTimeNetwork}}
\renewcommand{\split}{\mathrm{split}}

\newif\ifrandom
\randomtrue

\newcommand{\defeq}{\stackrel{\mathrm{\scriptscriptstyle def}}{=}}

\newcommand{\poly}{{\mathrm{poly}}}

\newcommand{\todolater}[1]{}
\newcommand{\rank}{\mathrm{rank}}
\newcommand{\tw}{\mathrm{tw}}

\crefname{algocf}{Algorithm}{Algorithms}

\author{Yang P. Liu\\Stanford University \\
\texttt{yangpliu@stanford.edu}
\thanks{Research supported by the Department of Defense (DoD) through the National Defense
Science and Engineering Graduate Fellowship (NDSEG) Program.}}

\begin{document}

\title{Vertex Sparsification for Edge Connectivity \\ in Polynomial Time}

\begin{titlepage}
\clearpage\maketitle
\thispagestyle{empty}

\begin{abstract}
An important open question in the area of vertex sparsification is whether $(1+\eps)$-approximate cut-preserving vertex sparsifiers with size close to the number of terminals exist. The work \cite{CDL21} (SODA 2021) introduced a relaxation called \emph{connectivity-$c$ mimicking networks}, which asks to construct a vertex sparsifier which preserves connectivity among $k$ terminals exactly up to the value of $c$, and showed applications to dynamic connectivity data structures and survivable network design. We show that connectivity-$c$ mimicking networks with $\O(kc^3)$ edges exist and can be constructed in polynomial time in $n$ and $c$, improving over the results of \cite{CDL21} for any $c \ge \log n$, whose runtimes depended exponentially on $c$.
\end{abstract}

\end{titlepage}

\newpage

\section{Introduction}
\label{sec:intro}

Sparsification is the fundamental concept of reducing the size of a large graph $G$ while still maintaining essential properties of the graph. Important examples include spanners \cite{Chew89}, which approximately preserve distances up to a multiplicative factor, or cut and spectral sparsifiers \cite{BK96,ST04}, which preserve the cuts and Laplacian spectrum of the graph up to a $(1+\eps)$ factor. These \emph{edge sparsifiers} allow one to reduce problems on dense graphs to sparse graphs at the cost of an approximation factor resulting from the sparsification. On the other hand, several methods such as elimination-based Laplacian solvers \cite{KLPSS16}, require \emph{vertex sparsification}, that is, reducing the number of vertices in the graph.

In this work, the notion of vertex sparsification that we consider is \emph{cut sparsification}, introduced by \cite{HKNR98,Moitra09,LM10}. In this problem, we are given a graph $G$ and a set of \emph{terminals} $\T \subseteq V(G)$. Our goal is to construct a smaller graph $G'$ that approximates the value of all cuts in $G$ between terminals up to a multiplicative factor $q$, called the \emph{quality} of the sparsifier. Precisely, we wish to construct a graph $G'$ also containing the terminals $\T$ among its vertices, such that for any subset $S \subseteq \T$, the minimum cut in $G$ (respectively $G'$) between $S$ and its complement $\T \bs S$ differ by at most the multiplicative approximation $q$.

Desirable properties of an algorithm for producing vertex sparsifiers include achieving quality $q$ close to $1$, while still maintaining that the graph $G'$ constructed has few vertices, hopefully nearly linear in the number of terminals, which we denote as $k = |\T|$. Additionally, for applications we also want for the runtime of the vertex sparsification algorithm to be polynomial or even linear in the size of the original graph $G$. Previous results achieve various subsets of these properties. In the setting where the sparsifier $G'$ has vertex set exactly $\T$, and upper bound quality $q = O(\log k / \log\log k)$ was achieved \cite{Moitra09,LM10,CLLM10,EGKRTT14}, and a polynomial runtime was achieved by \cite{MM10}. In the setting where the quality $q = 1$, upper bounds of $2^{2^k}$ \cite{HKNR98,KR14} and lower bounds of $2^{\Omega(k)}$ \cite{KR13} were achieved. Additionally, \cite{KW12} achieved a bound of size $O(Z^3)$ for quality $q = 1$ in polynomial time for graphs with integer capacities, where $Z$ is the total degree of all terminals. Despite this, it is still not known whether quality $q = 1+\eps$ sparsifiers with $\O(\poly(k/\eps))$ vertices exist except in special cases \cite{AGK14,ADKKP16}.

In this paper we provide a vertex sparsification algorithm for $c$-edge connectivity, a thresholded version of cut sparsification, which maintains all cuts of size at most $c$ exactly, has size linear in the number of terminals and polynomial in $c$, and runs in polynomial time for all $c$ (\cref{thm:main}). This notion, which we call a \emph{connectivity-$c$ mimicking network}, was introduced in \cite{CDL21} to study dynamic connectivity problems and parametrized complexity, and has resulted in the first fully dynamic online algorithm for $c$-connectivities with almost constant update time for constant $c$ \cite{JS20}. Precisely, we say that graph $G'$ is a connectivity-$c$ mimicking network for $G$ with terminals $\T$ if all cuts between terminals in $G$ with at most $c$ edges are maintained exactly in $G'$.

The previous algorithm of \cite{CDL21} which builds a connectivity-$c$ mimicking network with $O(kc^4)$ edges does not run in polynomial time because it uses the idea of a well-linked decomposition from \cite{Chu12}, which requires an exact solution to a restricted version of the sparsest cut problem. As a result, the algorithm's runtime depended exponentially on $c$.

Our main contribution is to open up the matroid-based cut covering lemmas of \cite{KW12}, and combine this with a much weaker notion of a well-linked decomposition. Precisely, we avoid doing a full well-linked decomposition, and instead just to partition the graph $G$ into pieces whose ``expansion'' with respect to terminals is not too small, similar to an expander decomposition. Now our key lemma (\cref{lemma:expanderc}), which applies matroid theory and representative sets following \cite{KW12}, directly gives a bound on the size of a connectivity-$c$ mimicking network for these expander-like pieces in terms of the expansion. We then combine the connectivity-$c$ mimicking networks on the pieces to get a connectivity-$c$ mimicking network for the original graph. In this way, we can afford to use approximate sparsest cut algorithms (\cref{thm:sparsecut} \cite{ALN05}), which run in polynomial time. Unfortunately, we do not achieve a $O(m \poly(c))$ runtime due to requiring expensive linear algebra to find independent sets in matroids. However, we are optimistic that a $O(m \poly(c))$ runtime is achievable and that it has further applications to improved data structures for connectivity and flows, which we detail in \cref{subsec:future}.

\subsection{Our Results}
Our main result is that given a graph $G$ with $k$ terminals $\T$, we can build a connectivity-$c$ mimicking network with $\O(kc^3)$ edges in polynomial time.
\begin{theorem}
\label{thm:main}
Given any edge-capacitated graph $G$ with $n$ vertices along with a set $\T$ of $k$ terminals, there is an algorithm which constructs a connectivity-$c$ mimicking network $H$ of $G$ with $O(kc^3\log^{3/2}n\log\log n)$ edges in time $n^{O(1)}$.
\end{theorem}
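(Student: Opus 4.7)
The plan is to reduce \cref{thm:main} to the key structural lemma \cref{lemma:expanderc} via a \emph{terminal expander decomposition} of $G$ built from repeated invocations of an approximate sparsest cut algorithm. Concretely, I would start with the trivial partition $\{G\}$ and repeatedly do the following: while some piece $G_i$ admits a cut whose conductance relative to its inherited terminals is below a threshold $\Phi$, invoke the approximate sparsest cut routine of \cref{thm:sparsecut}/\cite{ALN05} (with approximation factor $\alpha$) to find such a cut, split $G_i$ along it, and \emph{promote} both endpoints of every separating edge to terminals in both new subpieces. This mirrors a standard expander decomposition, except that ``volume'' is measured by incident terminal degree rather than by total degree.

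Once the decomposition halts, every piece $G_i$ is a ``terminal expander'' of conductance at least $\Phi/\alpha$. I would then apply \cref{lemma:expanderc} to each $G_i$ separately, obtaining a connectivity-$c$ mimicking network $H_i$ whose size is bounded in terms of $c$, the expansion, and the local terminal count $k_i$; the aggregate $H = \bigcup_i H_i$ is the final output. I would then verify that $H$ is a connectivity-$c$ mimicking network for $G$ itself: any cut of value at most $c$ in $G$ separating subsets of $\T$ decomposes, piece by piece, into a cut among the local (possibly newly promoted) terminals of each $G_i$, and each such local cut is preserved by the corresponding $H_i$. Conversely, promoting interface vertices to terminals guarantees that no spurious small cut is created when we glue the $H_i$ together.

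The main technical hurdle is controlling the sum $\sum_i k_i$, which is the quantity \cref{lemma:expanderc} charges for. A standard potential argument on terminal-weighted entropy should show that each recursive split of conductance at most $\Phi/\alpha$ inflates the total terminal count by a factor of $(1 + O(\Phi))$, and since the recursion has depth $O(\log n)$ the cumulative inflation is at most $(1 + O(\Phi))^{O(\log n)} = 1 + O(\Phi \log n)$. Balancing this blowup against the $\poly(\alpha/\Phi)$ dependence hidden in \cref{lemma:expanderc} and choosing $\Phi = \Theta(1/\sqrt{\log n})$ up to $\log\log n$ factors would yield the target bound of $O(kc^3 \log^{3/2} n \log\log n)$ edges.

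Finally, the running time is polynomial: each recursive call issues one polynomial-time sparsest cut query, the recursion tree has $\poly(n)$ nodes, and \cref{lemma:expanderc} itself runs in $\poly(n, c)$ time via the matroid representative-set machinery of \cite{KW12}, where the bulk of the cost is linear algebra over the $c$-connectivity matroid rather than the combinatorial decomposition. I expect the most delicate step to be the charging of $\sum_i k_i$, hand in hand with the verification that combining the $H_i$'s preserves \emph{all} terminal min-cuts of value at most $c$; this correspondence is exactly what motivates promoting every boundary vertex to a terminal on both sides of each split.
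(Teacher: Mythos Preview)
Your overall strategy is exactly the paper's: run a terminal-expander decomposition via approximate sparsest cut, apply \cref{lemma:expanderc} on each piece, and glue using the partition lemma. Two points need correction, however.

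First, your parameter balancing is off. The approximation $(1+O(\Phi))^{O(\log n)} = 1 + O(\Phi\log n)$ is only valid when $\Phi\log n = O(1)$; with your choice $\Phi = \Theta(1/\sqrt{\log n})$ the inflation is $e^{\Theta(\sqrt{\log n})}$, which blows up the terminal count far beyond any polylog. There is nothing to ``balance'': you simply need $\sum_i k_i = O(k)$, which forces $\Phi = \Theta(1/\log n)$. The paper sets the cutting threshold so that every cut made has terminal-sparsity at most $1/(10\log n)$ (equivalently $\phi = 1/(10\,\sc_n\log n)$ for the expansion guarantee), and then $d = c\phi^{-1} = O(c\,\sc_n\log n)$ in \cref{lemma:expanderc} gives the $\log^{3/2}n\log\log n$ factor directly, with no optimization.

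Second, \cref{lemma:expanderc} requires all terminals to have degree $1$, which your ``promote both endpoints of every separating edge to terminals'' does not ensure. The paper first performs the global reduction of attaching $c$ pendant leaves to each original terminal (so $k$ becomes $kc$), and then, when partitioning, treats each boundary \emph{edge} of a piece as a new degree-$1$ terminal $V[\partial(X_i)]$. This is also where the third factor of $c$ in the final $O(kc^3\log^{3/2}n\log\log n)$ bound comes from; your sketch as written would only account for $O(kc^2\cdot\text{stuff})$.

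With these two fixes your argument coincides with the paper's proof.
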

Without needing a polynomial time algorithm, we can slightly improve the size. We remark that the mimicking networks we construct are all minors of $G$, hence Theorem 1.1 in \cite{CDL21} and the comment afterwards show that these mimicking networks can be constructed in $O(m(c\log n)^{O(c)})$ time, which is super-polynomial for $c \ge \log n$.
\begin{theorem}
\label{thm:exist}
Given any edge-capacitated graph $G$ with $n$ vertices along with a set $\T$ of $k$ terminals, there is a connectivity-$c$ mimicking network $H$ of $G$ with $O(kc^3)$ edges.
\end{theorem}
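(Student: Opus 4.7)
\textbf{Proof plan for \cref{thm:exist}.} The approach is to follow the same framework used for \cref{thm:main}, but replace the polynomial-time approximate sparsest-cut subroutine (\cref{thm:sparsecut} of \cite{ALN05}) with an exact minimum-sparsity cut obtained by brute-force enumeration over all $2^n$ bipartitions. Since \cref{thm:exist} imposes no runtime constraint, this substitution is permissible and removes precisely the $\log^{3/2} n \log\log n$ approximation overhead that appears in \cref{thm:main}.

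First I would run a recursive well-linked decomposition of $G$ with respect to $\T$: while some piece $G_i$ admits a cut $(A, V(G_i)\setminus A)$ whose \emph{terminal sparsity} $|E(A, V(G_i)\setminus A)|/\min(|\T_i \cap A|, |\T_i \cap V(G_i)\setminus A|)$ is below a fixed threshold $\alpha = \Theta(1)$, split $G_i$ along that cut and promote the endpoints of the cut edges to terminals in both resulting sub-pieces. The exact minimum-sparsity cut is located by exhaustive enumeration, which is sufficient for a pure existence proof. A standard charging argument on the recursion tree --- each split adds at most $2\alpha$ times the smaller side's terminal count in new terminals, so that choosing $\alpha$ small enough makes the geometric series telescope --- gives $\sum_i |\T_i| = O(k)$ across all leaves.

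Next, I would apply \cref{lemma:expanderc} to each leaf piece $G_i$, which by construction has terminal-expansion at least $\alpha$: the lemma yields a minor $H_i$ of $G_i$ with $O(|\T_i| c^3)$ edges that preserves every terminal cut of value at most $c$ in $G_i$ exactly. Gluing the $H_i$'s back by identifying the duplicated cut-endpoint terminals across pieces produces a graph $H$ with $\sum_i O(|\T_i|c^3) = O(kc^3)$ edges, matching the claim. Verification that $H$ is a connectivity-$c$ mimicking network for $G$ is immediate: any cut of value at most $c$ between terminals of $G$ restricts to a cut of value at most $c$ on each $G_i$ and is therefore preserved exactly by $H_i$, and the restrictions reassemble into the original cut in $H$.

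The main obstacle is guaranteeing that the well-linked decomposition introduces only a constant-factor blowup in the terminal count --- that is, $\sum_i |\T_i| = O(k)$ with the hidden constant independent of $c$ --- since any factor of $c$ here would multiply through the $c^3$ bound from \cref{lemma:expanderc} and recover only the $O(kc^4)$ guarantee of \cite{CDL21}. This is precisely the content of the amortization step and hinges on $\alpha$ being a sufficiently small absolute constant. Fortunately, the same amortization underlies the proof of \cref{thm:main}, so once one trusts that analysis, the remaining work for \cref{thm:exist} reduces to observing that the exact sparsest-cut oracle allows us to drop all $\log n$ factors from the final bound.
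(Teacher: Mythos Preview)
Your proposal has a real gap in the amortization step. You assert that with an exact sparsest-cut oracle and a constant threshold $\alpha = \Theta(1)$ the decomposition yields $\sum_i |\T_i| = O(k)$, and you justify this by pointing to the amortization behind \cref{thm:main}. But the proof of \cref{thm:main} sets the expansion parameter to $\phi = \frac{1}{10\,\sc_n \log n}$; the extra $\log n$ in the denominator is \emph{not} an artifact of the approximation ratio $\sc_n$ --- it is there so that the per-level terminal blowup $(1 + O(\phi\,\sc_n)) = (1 + O(1/\log n))$, compounded over $O(\log n)$ recursion levels, stays $O(1)$. Plugging in an exact oracle sets $\sc_n = 1$ but leaves $\phi = \Theta(1/\log n)$, and the final bound $O(kc^2\phi^{-1})$ is still $O(kc^2\log n)$. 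If instead you take $\alpha = \Theta(1)$ as you propose, each level multiplies the total terminal count by roughly $(1+\Theta(\alpha))$, and over $\Theta(\log k)$ levels this compounds to $k^{1+\Theta(\alpha)}$, not $O(k)$; the ``geometric series'' does not telescope with an absolute-constant ratio.

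The paper's proof of \cref{thm:exist} sidesteps this with a cutting rule that is not sparsity-based at all: split a piece $X_i$ whenever it has a cut of size at most $c$ with at least $3c$ terminals on \emph{each} side. The potential $\psi = \sum_i (|\T_i| - 3c)$ starts at $k - 3c$, stays nonnegative, and drops by $3c - 2|E_{X_i}(S,V_i\setminus S)| \ge c$ per split, so there are at most $k/c$ splits and hence $\sum_i |\T_i| \le 3k$. At termination every remaining piece has the property that any cut of size at most $c$ leaves fewer than $3c$ terminals on the smaller side --- exactly the hypothesis of \cref{lemma:expanderc} with $d = 3c$ --- giving $O(|\T_i| c^2)$ edges per piece and $O(kc^2)$ in total. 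The third factor of $c$ in the theorem comes from the degree-$1$ terminal reduction described in \cref{sec:prelim}, which your write-up also does not invoke (and which is why your stated per-piece bound of $O(|\T_i|c^3)$ does not match what \cref{lemma:expanderc} actually gives).
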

These results improve over Theorem 1.1 in \cite{CDL21} for any $c \ge \log n$, as the runtime there depended on $c^{O(c)}$, which is super-polynomial for $c \ge \log n$. This gives an improvement for the runtime for survivable network design on low treewidth graphs, following Theorem 6.3 of \cite{CDL21}, and we refer the readers to Section 6.2 of \cite{CDL21} for details.
\begin{corollary}
\label{cor:SNDP}
There is an algorithm that exactly solves {\sf Subset $c$-EC} on an input graph $G$ with $n$ vertices in time $n \exp\left(O(c^3\tw(G)\log(\tw(G)c)\right)$, where $\tw(G)$ denotes the treewidth of $G$.
\end{corollary}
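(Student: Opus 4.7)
The plan is to invoke Theorem 6.3 of \cite{CDL21} as a black box, substituting our improved size bound from \cref{thm:exist} in place of the one used there. That theorem reduces \textsf{Subset $c$-EC} on a graph $G$ of treewidth $w = \tw(G)$ to dynamic programming on a width-$w$ tree decomposition: at each bag $B$ one tracks the ``interface'' that the chosen subgraph (restricted to the subtree rooted at $B$) presents to the rest of $G$, where two partial solutions are equivalent when they induce identical connectivity-$c$ mimicking networks on $B$. The number of DP states per bag is thus bounded by the number of distinct connectivity-$c$ mimicking networks on $w+1$ vertices obeying the size guarantee of \cref{thm:exist}.

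Plugging in the bound $O(kc^3)$ from \cref{thm:exist} with $k = w+1$ yields mimicking networks with at most $O(c^3 w)$ edges. Since only edge capacities in $[1,c]$ matter for connectivity-$c$ purposes (larger capacities behave identically to capacity $c$), the number of such edge-capacitated multigraphs on $w+1$ labeled vertices is at most $((w+1)^2 \cdot c)^{O(c^3 w)} = \exp(O(c^3 w \log(c w)))$. Standard tree-decomposition DP then processes the $O(n)$ bags with transitions polynomial in the state count, delivering the target runtime $n \exp(O(c^3 \tw(G) \log(\tw(G) c)))$.

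The one substantive thing to verify is that our mimicking networks are minors of $G$, which is the hypothesis needed to legitimately substitute them into the framework of \cite{CDL21} (so that their edges correspond to paths in $G$ carrying the correct capacity). This is exactly what is asserted in the remark following \cref{thm:exist}. Beyond that bookkeeping, no new argument is needed; the only ``obstacle'' I anticipate is confirming that the black-box statement of Theorem 6.3 in \cite{CDL21} is phrased generically in terms of an abstract size function $f(k,c)$, rather than being hard-coded to the $O(kc^4)$ bound originally available to them. Once that generic form is in hand, the corollary follows by direct substitution of $f(k,c) = O(kc^3)$ and the state-counting estimate above.
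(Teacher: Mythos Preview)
Your proposal is correct and matches the paper's approach exactly: the paper does not give a standalone proof of \cref{cor:SNDP} but simply points to Theorem~6.3 and Section~6.2 of \cite{CDL21}, noting that the improved size bound from \cref{thm:exist} (with the minor property remarked after it) plugs into that framework. Your write-up is in fact more detailed than what the paper provides.
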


\subsection{Related work}
Our result brings together ideas from several areas, including work on cut sparsification and mimicking networks, matroid theory and polynomial kernelization, and sparsest cut and expander decompositions.
\paragraph{Cut sparsification and mimicking networks.} Without additional vertices, the best known upper and lower bounds are $O(\log k / \log \log k)$ \cite{CLLM10,MM10} and $\Omega(\sqrt{\log k} / \log\log k)$ \cite{MM10}. \cite{Chu12} provides an algorithm constructing $O(1)$-quality sparsifier with $O(Z^3)$ edges in time $\poly(n) \cdot 2^Z$, and \cite{KW12} showed a polynomial time algorithm for a quality-$1$ sparsifier with $O(Z^3)$ edges using matroid theory. It is open in general whether $(1+\eps)$-quality cut sparsifiers with at most $\O(\poly(k/\eps))$ edges exist.

A different studied \emph{mimicking networks}, i.e. quality-$1$ cut sparsifiers, with size depending only on the number of terminals $k$. Here, upper bounds of $2^{2^k}$ \cite{HKNR98,KRTV12} and lower bounds of $2^{\Omega(k)}$ \cite{KR14} are known. On planar graphs, an upper bound of $O(k^22^{2k})$ \cite{KR13} and lower bound of $2^{\Omega(k)}$ \cite{KPZ17} are known. Additionally, \cite{CSWZ00} provides a $O(k \cdot 2^{2^{\tw(G)}})$ upper bound in bounded treewidth graphs, as well as several sharper results when the number of terminals is small.

\paragraph{Sparsest cut and expansion.} The conductance of a graph is a fundamental quantity that has been extensively studied. Cheeger's inequality \cite{Cheeger70} relates the conductance to the spectrum of the graph, and since there has been significant interest in efficiently approximating the conductance and the related sparsest cut problem. A $O(\log n)$ approximation was given by \cite{LR99}, and later approximation ratios of $O(\sqrt{\log n})$ for the uniform case \cite{ARV09} and $O(\sqrt{\log n}\log\log n)$ for the nonuniform case \cite{ALN05}. Additionally, there was later work towards making these algorithms more efficient \cite{AHK04,Sherman09}.

Related to the sparsest cut problem is the concept of \emph{expander decomposition}, that is, how to partition a graph so that all pieces are expanders? There has been significant work towards achieving linear time algorithms for expander decompositions \cite{ST04,NS17,SW19,CGL20}, with many works based on the cut-matching game \cite{KRV06}. Both the cut-matching game \cite{RST14} and expander decompositions \cite{KLOS14,CGP18,KPSW19,CDL21} have seen significant use in graph algorithms.

\paragraph{Polynomial kernelization and parametrization.} The concept of kernelization in parameterized algorithms is, given a parameter $k$ for a problem, to efficiently reduce the problem size to something depending only on $k$, while preserving necessary quantities. For example, if $k$ is the number of terminals in a graph $G$, one can ask whether the size of $G$ can be reduced to size polynomial in $k$ to maintain all terminals cuts exactly. \cite{KW12}, using tools from matroid theory and representative sets \cite{Lovasz77,Marx09}, shows polynomial kernels for several problems. Recently, \cite{W20} has built an improved size for kernelization of multiway-cut using a combination of graph partitioning and representative sets, somewhat similar to the algorithm of this work.

\subsection{Future directions}
\label{subsec:future}
There are several promising directions to pursue.

\paragraph{Improved algorithm runtime.} Further improving the algorithm to construct connectivity-$c$ mimicking networks of size $\O(kc^3)$ in $\O(m \cdot \poly(c))$ time would result in several potential applications. In particular, it may be possible to use such an algorithm to develop algorithms for fully dynamic online $c$-connectivity following work of \cite{JS20} that have update time $\O(\poly(c))$ per iteration. On the other hand, the per query time of \cite{JS20} depends exponentially on $c$ (at least $(c\log n)^c$). Additionally, there may be further extensions towards fully dynamic $(1+\eps)$-approximate maxflow algorithms in unweighted graphs.

One potential route to achieve a nearly linear construction runtime $\O(m \cdot \poly(c))$ is to reinterpret the concepts of representative sets on the direct sum matroids constructed in the proof of \cref{lemma:expanderc} combinatorially, in terms of matching, flows, and cuts. In this way, it may be possible to construct the representative set by running a maximum flow routine on the correct graph, instead of doing linear algebra over matroid representations.

\paragraph{Optimal $c$ dependence in sparsifier size.} It would be interesting to get an optimal dependence on $c$ in the size of the connectivity-$c$ mimicking network. We were only able to prove a lower bound of $2kc$, and a lower bound of $\Omega(kc)$ in the setting where all terminals have degree $1$. As such, we are inclined to believe an upper bound of $O(kc^2)$ or even $O(kc)$.

\paragraph{Application towards $(1+\eps)$-quality cut sparsification.} It would be exciting to apply the ideas of this work towards constructing $(1+\eps)$-quality cut sparsifiers with at most $\O(k/\poly(\eps))$ edges. A natural approach would be to pick a slightly super-constant $c = C\log n/\eps^2$, find a set of edges $F$ which covers all cuts of size at most $c$ between terminals, and apply a uniform sampling procedure \cite{BK96,ADKKP16} to the remaining edges. Na\"ively, this approach seems to not work due to the existence of several potential minimum cuts between a fixed terminal partition.
\section{Preliminaries}
\label{sec:prelim}

\subsection{Terminals and connectivity}
Throughout, we assume that we work with unweighted multigraphs. We may assume this by replacing an edge with weight $w$ with $w$ copies of edges with weight $1$. We work with multigraphs because our algorithms involve edge contractions, which naturally creates multigraphs.
Given a graph $G$ and disjoint sets $A, B \subseteq V(G)$ define the minimum cut between $A, B$ in $G$ as
\[ \mincut_G(A, B) \defeq \min \left\{|E_G(S, V(G)\bs S)| : A \subseteq S, B \subseteq V(G) \bs S \right\}. \]
Given this, we can formally define connectivity-$c$ mimicking networks.
\begin{definition}[$(\T, c)$-equivalence]
\label{def:tc}
Given graphs $G, H$, both containing terminals $\T$, we say that $G, H$ are $(\T, c)$-equivalent if for all $S \subseteq \T$ we have that
\[ \min\left(\mincut_G(S, \T \bs S), c\right) = \min\left(\mincut_H(S, \T \bs S), c\right). \]
If $G$ and $H$ are $(\T, c)$-equivalent, we can say that $H$ is a \emph{connectivity-$c$ mimicking network} for $G$.
\end{definition}
We would like to note that contracting edges cannot decrease any minimum cuts between subsets $A, B \subseteq V(G)$.

We will construct connectivity-$c$ mimicking networks for graphs $G$ by finding a subset $F \subseteq E(G)$ that covers all cuts of size at most $c$ in $G$ between terminals.
\begin{definition}[Cut covering]
\label{def:cutcover}
Let $G$ be a graph with terminals $\T$. We say that a set of edges $F \subseteq E(G)$ \emph{covers all $c$-cuts} if for all subsets $S \subseteq \T$ with $\mincut_G(S, \T\bs S) \le c$, there exists a subset of edges $F^S \subseteq F$ such that $F^S$ is a cut with $\mincut_G(S, \T\bs S)$ edges between $S$ and $\T \bs S$ in $G$.
\end{definition}
It is clear that if $F$ covers all $c$-cuts, then contracting $G$ onto $F$ gives a connectivity-$c$ mimicking network.

We may reduce to the situation where all terminals have degree $1$ by increasing the number of terminals by a factor of $c$. Given a graph $G$ with terminals $\T$, we construct a new graph $G^\new$ with terminals $\T^\new$ as follows. To construct $G^\new$, for each vertex $t \in \T$ add $c$ new vertices to $G$ and connect them to $t$, and let $\T^\new$ be the set of $|\T|c$ new vertices added. We can see that if a set of edges covers all $c$-cuts in $G^\new$ for terminals $\T^\new$, then it also does so for $G$ with terminals $\T$.

Given this reduction, we can show that if we partition the vertex set of $G$ and add terminals corresponding to the boundary edges of the partition, then it suffices to compute a set of edges covering all $c$-cuts on the partition pieces. Here, for a subset $X \subseteq V$, we define the boundary edges $\partial(X) \defeq E_G(X, V(G)\bs X)$, and $V[\partial(X)]$ to denote turning each boundary edge in $\partial(X)$ into a vertex of degree $1$, with the other endpoint lying in $X$. Here $\sqcup$ throughout denotes disjoint union.
\begin{lemma}[Partition of $G$]
\label{lemma:partition}
Let $G = (V, E)$ be a graph with terminals $\T$, all of degree $1$, and let $X = V \bs \T$. Let $X = X_1 \sqcup X_2 \sqcup \dots \sqcup X_p$ be a partition of $X$. For $1 \le i \le p$ define graph $G_i$ to have vertex set $V_i \defeq X_i \cup V[\partial(X_i)]$, terminals $\T_i \defeq V[\partial(X_i)]$, and edge set $E_i \defeq E(X_i) \cup \partial(X_i)$. Let $F_i$ cover all $c$-cuts in $G_i$. Then $\bigcup_i F_i$ covers all $c$-cuts in $G$.
\end{lemma}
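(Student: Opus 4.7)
The plan is to start from a minimum cut in $G$ realizing $\mu \defeq \mincut_G(S, \T \bs S) \le c$ and iteratively modify the partition, one piece at a time, using the covering property of each $F_i$ to route the cut edges into $F = \bigcup_i F_i$ while not increasing the cut size. After processing all $p$ pieces the resulting partition gives a cut in $G$ of size exactly $\mu$ whose edges all lie in $F$, which is exactly what \cref{def:cutcover} requires.

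Concretely, fix $S \subseteq \T$ with $\mu \le c$ and let $(A^{(0)}, B^{(0)})$ be a minimum-cut partition of $V$ with $S \subseteq A^{(0)}$ and $\T \bs S \subseteq B^{(0)}$. For $i = 1, \dots, p$, I would modify only the vertices in $X_i$ as follows. Define $S_i \subseteq \T_i = V[\partial(X_i)]$ by placing the pseudo-terminal corresponding to $e \in \partial(X_i)$ in $S_i$ exactly when the endpoint of $e$ outside $X_i$ lies in $A^{(i-1)}$. Then $A^{(i-1)}$ naturally induces a partition of $V_i$ consistent with $S_i$ whose cut in $G_i$ has size $|E_G(A^{(i-1)}, B^{(i-1)}) \cap E_i| \le \mu \le c$. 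By the covering property of $F_i$, there exists $F_i^{S_i} \subseteq F_i$ realizing $\mincut_{G_i}(S_i, \T_i \bs S_i)$ via some partition $(A'_i, B'_i)$ of $V_i$ with $S_i \subseteq A'_i$. Set $A^{(i)} \defeq (A^{(i-1)} \bs X_i) \cup (A'_i \cap X_i)$. Since edges in $E(G) \bs E_i$ cross $(A^{(i)}, B^{(i)})$ iff they cross $(A^{(i-1)}, B^{(i-1)})$, while edges in $E_i$ now contribute exactly $\mincut_{G_i}(S_i, \T_i \bs S_i) \le |E_G(A^{(i-1)}, B^{(i-1)}) \cap E_i|$, we get $|E_G(A^{(i)}, B^{(i)})| \le |E_G(A^{(i-1)}, B^{(i-1)})|$.

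After all $p$ iterations the cut has size at most $\mu$, hence exactly $\mu$ by minimality. It remains to verify that every edge crossing $(A^{(p)}, B^{(p)})$ lies in $F$. Edges entirely within some $X_i$, or between $X_i$ and a terminal, correspond to cut edges of $(A'_i, B'_i)$ in $G_i$ and so lie in $F_i^{S_i}$. The main obstacle, and the delicate case, is a boundary edge $e = (u, v)$ between two pieces $X_i$ and $X_j$ with $i < j$: because $X_i$ was modified first and left untouched thereafter, $u$'s side in $A^{(j-1)}$ equals its side in $A'_i$, so the pseudo-terminal for $e$ in $G_j$ is assigned to $S_j$ based on $A'_i$ rather than the original $A^{(0)}$ assignment. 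A short check then shows that $e$ crosses $(A^{(p)}, B^{(p)})$ if and only if the corresponding pseudo-edge crosses $(A'_j, B'_j)$ in $G_j$, which forces $e \in F_j^{S_j} \subseteq F$. Bookkeeping this consistency across all cross-piece boundary edges is where the proof requires the most care; once it is established, $F$ covers all $c$-cuts in $G$ as required.
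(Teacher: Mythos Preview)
Your proof is correct and, in fact, more carefully argued than the paper's four-line sketch. Both proofs share the same high-level plan: restrict a minimum $S$-cut of $G$ to each piece $G_i$, use that $F_i$ covers the induced terminal cut there, and glue the per-piece cuts back together. The paper does this in one shot---it fixes the terminal assignment $(T_i,\T_i\bs T_i)$ from the \emph{original} global cut, replaces each piece independently, and asserts that the results ``combine''---whereas you process the pieces sequentially, re-reading the terminal assignment $S_i$ from the \emph{current} partition $A^{(i-1)}$ before touching $X_i$.

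Your sequential update is exactly what makes the cross-piece case go through cleanly. In a purely parallel replacement, a boundary edge $e=(u,v)$ with $u\in X_i$, $v\in X_j$ can end up crossing the combined cut while the corresponding pseudo-edge crosses in \emph{neither} $G_i$ nor $G_j$ (for instance, when both $u$ and $v$ swap sides relative to the original cut); in that scenario one has no direct certificate that $e\in F$. Your approach sidesteps this entirely: by the time $X_j$ is processed, $u$'s side is already final, so the pseudo-terminal for $e$ in $G_j$ is placed on $u$'s \emph{final} side, and hence $e$ crosses $(A^{(p)},B^{(p)})$ iff the pseudo-edge crosses $(A'_j,B'_j)$, forcing $e\in F_j^{S_j}$. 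The monotonicity of the cut size along the sequence, together with this consistency check at the end, gives a complete argument. The paper's proof is morally the same but suppresses precisely this bookkeeping; your write-up makes the delicate step explicit.
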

\begin{proof}
For a subset $T \subseteq \T$, consider a minimum cut $(S, V \bs S)$ which separates $T$ and $\T \bs T$ in $G$ and is minimal. Say that this cut induces the terminal cut $(T_i, \T_i \bs T_i)$ on $X_i$. Note that this cut on $X_i$ clearly has at most $c$ edges, so $F_i$ covers this terminal cut. Because all terminals in all the $X_i$ (and $X$) correspond to boundary edges, we can combine the induced cuts on each $X_i$ to get a cut on $X$, as desired.
\end{proof}

\subsection{Matroids and representative sets}

Our algorithm will require several concepts from matroid theory and representative sets \cite{KW12}. We denote matroids $\M = (S, \I)$, where the ground set is $S$ with independent sets $\I$.

This work only considers representable matroids.
\begin{definition}[Representable matroid]
\label{def:repmatroid}
We say that a matroid $\M = (S, \I)$ is \emph{representable} if there is a field $\F$ and vectors $v_1, v_2, \dots, v_{|S|} \in \F^{|S|}$ such that a set $X \in \I$ if and only if the vectors $(v_x)_{x \in X}$ are independent over $\F$.
\end{definition}
The \emph{rank} of a matroid is the maximum size of any independent set. Given matroids $\M_1 = (S_1, \I_1), \M_2 = (S_2, \I_2)$, the direct sum $\M = \M_1 \oplus \M_2$ is defined to have ground set $S_1 \sqcup S_2$ and independent sets $X_1 \sqcup X_2$ for $X_1 \in \I_1, X_2 \in \I_2$. The direct sum of representable matroids is representable.

\emph{Transversal matroids} and \emph{gammoids} are examples of representable matroids.
\begin{definition}[Transversal matroid]
\label{def:transmatroid}
Given a bipartite graph $G = (A, B)$, a \emph{transversal matroid} $\M = (B, \I)$ is defined so that a set $X \subseteq B$ is independent if and only if there is a matching in $G$ that covers $X$.
\end{definition}
Gammoids are the dual of transversal matroids.
\begin{definition}[Gammoids]
\label{def:gammoid}
For a directed graph $G = (V, E)$ and vertex subset $\T \subseteq V$, a \emph{gammoid} $\M = (V, \I)$ is defined so that a set $X \subseteq V$ is independent if and only if there are $S$ vertex disjoint paths from $\T$ to $X$.
\end{definition}
Combining Proposition 3.6 and 3.11 of \cite{Marx09} shows that a representation of a gammoid can be constructed in randomized polynomial time with failure probability exponentially small in $n$.
\begin{theorem}[\!\!\cite{Marx09}]
\label{thm:gammoidrep}
For a directed graph $G = (V, E)$ with $n$ vertices and vertex subset $\T \subseteq V$, the gammoid defined in \cref{def:gammoid} is a matroid, and a representation can be found in polynomial time with probability at least $1-2^{-n}$.
\end{theorem}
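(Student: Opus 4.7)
The plan is to establish the two assertions of the theorem separately: first that the set system in \cref{def:gammoid} is actually a matroid, and second that a representation can be constructed in randomized polynomial time. The matroid property itself is classical and follows from Menger's theorem. Given independent sets $A, B$ with $|A| < |B|$, fix vertex-disjoint path systems $\mathcal{P}_A, \mathcal{P}_B$ from $\T$ to $A$ and $\T$ to $B$ respectively; a standard flow-augmentation argument on the symmetric difference of $\mathcal{P}_A$ and $\mathcal{P}_B$ produces a vertex-disjoint path system of size $|A|+1$ ending at $A \cup \{b\}$ for some $b \in B \setminus A$, verifying the exchange axiom so that $A \cup \{b\} \in \I$.

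For the polynomial-time representation, the approach is to exhibit the gammoid as a minor of a transversal matroid whose representation is directly computable via the Edmonds matrix. Build an auxiliary bipartite graph $H$ by splitting each $v \in V$ into an in-copy $v^-$ and out-copy $v^+$, adding the edge $\{v^-, v^+\}$ for every vertex and an edge $\{u^+, v^-\}$ for every directed edge $(u, v) \in E$. Vertex-disjoint paths in $G$ from $\T$ to a target set $X$ correspond bijectively to matchings in $H$ of a controlled shape, from which one verifies that the gammoid is a contraction/deletion minor of the transversal matroid on $H$ with, say, the out-copies as ground set. The transversal matroid on $H$ is then represented by the classical Edmonds matrix: assign a fresh indeterminate $x_e$ to each edge $e$ of $H$ and form the bipartite adjacency matrix, noting that a column set is independent iff the corresponding submatrix has full column rank (the bipartite symbolic-determinant identity relating matchings to nonzero minors). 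Substituting independent uniform random values from a field $\F$ of size $2^{cn}$ for a sufficiently large constant $c$, the Schwartz--Zippel lemma applied to the at most $\binom{n}{\rank} \le 2^n$ maximal minors, each a polynomial of degree $O(n)$, guarantees that the evaluated matrix still represents the transversal matroid with probability at least $1 - 2^{-n}$. Standard row reductions (pivoting) then implement the required contractions and deletions on the representing matrix in polynomial time, producing the desired gammoid representation.

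The main obstacle will be the ``split-vertex'' gadget reduction, where one must verify \emph{exactly} that matchings in $H$ correspond to valid vertex-disjoint $\T$-to-$X$ path families in $G$ (so that the minor operations recover the path-based independence structure and not some relaxation), as well as carrying out the Schwartz--Zippel union bound across all relevant minors so that the failure probability meets the target $2^{-n}$. A cleaner alternative I would also consider is a direct Lindstr\"{o}m-type path-polynomial construction, in which the entry of a $|\T| \times n$ matrix at $(t, v)$ is the generating function of simple paths from $t$ to $v$; however, evaluating such entries requires summing over exponentially many simple paths, so the transversal-matroid reduction is preferable for the polynomial-time guarantee.
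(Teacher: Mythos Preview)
The paper does not give its own proof of this statement; it simply records that the result follows by combining Propositions~3.6 and~3.11 of \cite{Marx09}. So the relevant comparison is between your sketch and the standard argument those propositions encode.

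Your high-level plan---reduce to a transversal matroid, represent it by a random evaluation of the bipartite Edmonds matrix, and control the failure probability with Schwartz--Zippel over a field of size $2^{\Theta(n)}$---is exactly the route taken in \cite{Marx09}, and your treatment of the matroid axioms and of the probability bound is fine.

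The gap is in the middle step. Your bipartite graph $H$ is built purely from $G$ and does not depend on $\T$ at all, yet the gammoid certainly does; so whatever transversal matroid you extract from $H$, it cannot equal the gammoid before you perform some operation that injects $\T$. You gesture at this with ``contraction/deletion minor,'' but never say which elements are contracted or deleted, and in fact gammoids are \emph{not} in general minors of transversal matroids in the usual sense. The classical argument (Ingleton--Piff, and what Marx's Proposition~3.11 packages) goes through \emph{duality}, not minors: one builds a bipartite graph $\Delta$ with parts $V\setminus\T$ and $V$, where $w\in V\setminus\T$ is adjacent to $v\in V$ iff $v=w$ or $(v,w)\in E$; the transversal matroid of $\Delta$ on ground set $V$ is then the \emph{dual} of the strict gammoid on $(G,\T)$, and one recovers the gammoid of \cref{def:gammoid} by restricting. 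Since the dual of a representable matroid is representable (via the orthogonal complement of the row space, computable in polynomial time), this closes the argument. Your split-vertex graph $H$ is the gadget for turning vertex-disjoint paths into edge-disjoint ones, which is a different reduction; it can be made to work, but not via the unspecified minor you invoke. Replace that step with the duality argument and the proof goes through.
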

We can view undirected graphs as directed graphs by turning each undirected edge into two directed edges in opposite directions.

We require a specific version of a fundamental lemma about finding small representative sets due to \lovasz~\cite{Lovasz77} and Marx \cite{Marx09}.
\begin{definition}[Representative sets]
\label{def:repset}
Given a representable matroid $\M = (S, \I)$ and a collections $\J$ of subsets of $S$, we say that $\J^* \subseteq \J$ is a \emph{representative} for $\J$ if the following holds: for every set $Y \subseteq S$, if there is a set $X \in \J$ disjoint from $Y$ such that $X \cup Y \in \I$, then there is a set $X^* \in \J^*$ disjoint from $Y$ such that $X^* \cup Y \in \I$.
\end{definition}
\begin{lemma}[\!\!\cite{Lovasz77,Marx09}]
\label{lemma:repset}
Let $\M_i = (S_i, \I_i)$ be representable matroids over a field $\F$ for $1 \le i \le p$, and let $\M = \M_1 \oplus \M_2 \oplus \dots \oplus \M_p$ be their direct sum. Define \[ S_1 \times S_2 \times \dots \times S_p = \left\{\{x_1,x_2,\dots,x_p\} : x_i \in S_i\right\}, \] that is, the collection of subsets of $S$ that have exactly one element from each $S_i$. Then every subset $\J \subseteq (S_1 \times S_2 \times \dots \times S_p)$ has a representative set $\J^*$ of size at most $\prod_{i=1}^p \rank(\M_i)$, and $\J^*$ is computable in time $|\J|^{O(1)}$.
\end{lemma}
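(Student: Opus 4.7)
The plan is to establish this via the standard tensor product construction going back to Lov\'asz and refined by Marx. Using that each $\M_i$ is representable over $\F$, fix representations $v_x \in \F^{r_i}$ for $x \in S_i$, where $r_i \defeq \rank(\M_i)$, so that independence in $\I_i$ corresponds to linear independence of the associated vectors. Discard any loops at the start so that $v_x \ne 0$ for every $x$.

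To each $X = \{x_1, \ldots, x_p\} \in \J$ with $x_i \in S_i$, associate the decomposable tensor $v_X \defeq v_{x_1} \otimes v_{x_2} \otimes \cdots \otimes v_{x_p}$ in the space $V \defeq \F^{r_1} \otimes \cdots \otimes \F^{r_p}$, which has dimension $\prod_i r_i$. Obtain the candidate $\J^*$ by running Gaussian elimination on $\{v_X : X \in \J\}$ to extract a maximal linearly independent subfamily. This immediately gives $|\J^*| \le \dim V = \prod_i \rank(\M_i)$ and runs in $|\J|^{O(1)}$ time.

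To verify $\J^*$ is representative, fix $Y \subseteq S$ and split $Y = Y_1 \sqcup \cdots \sqcup Y_p$ with $Y_i \subseteq S_i$, which is canonical because $S = S_1 \sqcup \cdots \sqcup S_p$. Define linear maps $\phi_i : \F^{r_i} \to \Lambda^{|Y_i|+1}(\F^{r_i})$ by $\phi_i(u) \defeq u \wedge \bigwedge_{y \in Y_i} v_y$ for some arbitrary ordering of $Y_i$; by the standard correspondence between matroid independence and nonvanishing wedge products, $\phi_i(u) \ne 0$ iff $\{u\} \cup \{v_y : y \in Y_i\}$ is independent in $\F^{r_i}$. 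Let $\Phi \defeq \phi_1 \otimes \cdots \otimes \phi_p$, so that for any $X \in \J$, $\Phi(v_X) = \bigotimes_i \phi_i(v_{x_i})$; this decomposable tensor is nonzero iff every factor is, which (by the direct-sum definition of $\M$) happens iff $X \cup Y \in \I$. Now if some $X \in \J$ disjoint from $Y$ satisfies $X \cup Y \in \I$, then $\Phi(v_X) \ne 0$; expanding $v_X$ as a linear combination of $\{v_{X^*} : X^* \in \J^*\}$ and applying $\Phi$ yields some $X^* \in \J^*$ with $\Phi(v_{X^*}) \ne 0$, hence $X^* \cup Y \in \I$. Finally, since $v_{x_i^*} \ne 0$, the fact $\phi_i(v_{x_i^*}) \ne 0$ forces $v_{x_i^*} \notin \mathrm{span}\{v_y : y \in Y_i\}$ and in particular $x_i^* \notin Y_i$, so $X^* \cap Y = \emptyset$.

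The main subtlety is that the single linear test $\Phi(v_{X^*}) \ne 0$ must simultaneously certify both independence of $X^* \cup Y$ and disjointness $X^* \cap Y = \emptyset$. This works precisely because in a loopless representable matroid, a repeated factor kills the wedge product, so $x_i^* \in Y_i$ already forces $\phi_i(v_{x_i^*}) = 0$; the argument is sharp because representability plus the direct-sum structure let us multiply these per-coordinate tests into a single rank-$\prod_i r_i$ linear-algebraic test, from which the size bound and the polynomial runtime drop out immediately.
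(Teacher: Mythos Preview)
Your proof is correct and follows essentially the same approach as the paper: map each $X \in \J$ to a decomposable tensor in $\F^{r_1} \otimes \cdots \otimes \F^{r_p}$, take $\J^*$ to be a maximal linearly independent subfamily, and verify representativeness via the wedge product in the tensor of exterior algebras (your $\Phi$ is exactly the paper's ``$\,\cdot\, \wedge w(Y)$''). The only point the paper handles more explicitly is the runtime bound, where it first notes that one may assume $|\J| \ge \prod_i r_i$ (otherwise output $\J^* = \J$), so that the linear algebra over the $\prod_i r_i$-dimensional tensor space is genuinely $|\J|^{O(1)}$.
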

For completeness, we provide a proof in \cref{proofs:repset}.

\subsection{Sparsest cut and conductance}
We will require polynomial time algorithms for the nonuniform sparsest cut problem. The best known approximation ratio is $O(\sqrt{\log n}\log\log n)$.
\begin{definition}[Nonuniform sparsest cut]
\label{def:sparsecut}
Given graphs $G, H$ on the same vertex set $V$, define the sparsest cut of $G$ with respect to $H$ as
\[ \sc(G, H) \defeq \min_{S \subseteq V} \frac{|E_G(S, V \bs S)|}{|E_H(S, V \bs S)|}. \]
\end{definition}
We will let $\sc_n$ denote the approximation ratio of the best known polynomial time approximation algorithm for nonuniform sparsest cut.
Given the result below, $\sc_n = O(\sqrt{\log n} \log \log n)$.
\begin{theorem}[Sparsest cut approximation (\!\!\cite{ALN05} Theorem 1.2)]
\label{thm:sparsecut}
Given graphs $G, H$ on the same vertex set $V$, there is a polynomial time algorithm which produces a set $S$ satisfying
\[ \frac{|E_G(S, V \bs S)|}{|E_H(S, V \bs S)|} \le O(\sqrt{\log n}\log\log n) \sc(G, H). \]
\end{theorem}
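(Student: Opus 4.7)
The plan is to follow the standard SDP-rounding paradigm for sparsest cut. First I would write down the natural vector relaxation: for each vertex $u \in V$ introduce a vector $v_u \in \R^n$ and minimize $\sum_{(u,v) \in E(G)} \|v_u - v_v\|^2$ subject to $\sum_{(u,v) \in E(H)} \|v_u - v_v\|^2 = 1$ and the $\ell_2^2$-triangle inequalities $\|v_u - v_w\|^2 \le \|v_u - v_v\|^2 + \|v_v - v_w\|^2$ for all triples $u,v,w \in V$. This is a valid relaxation of $\sc(G,H)$ because every cut $(S, V \bs S)$ can be encoded as such vectors by placing each $v_u$ at one of two orthogonal points depending on membership in $S$, and the SDP can be solved to any prescribed accuracy in polynomial time via the ellipsoid or interior-point method.

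The main obstacle, and the heart of the theorem, is the rounding. Given an optimal SDP solution I would view $d(u,v) = \|v_u - v_v\|^2$ as an $\ell_2^2$-semimetric and aim to construct a distortion-$D$ embedding of $(V,d)$ into $\ell_1$; equivalently, to express $d$ up to a factor $D$ as a nonnegative combination of cut semimetrics. Once this is achieved, averaging the SDP ratio over the decomposition produces at least one cut whose sparsity is within $D$ of the SDP optimum; outputting the best cut from the decomposition yields the algorithm. For the uniform demand case Arora--Rao--Vazirani achieve $D = O(\sqrt{\log n})$ via their structural lemma: any $\ell_2^2$ metric arising from an SDP satisfying a matching constraint contains two $\Omega(n)$-sized sets at $\ell_2^2$-distance $\Omega(1/\sqrt{\log n})$, which can then be amplified into a full $\ell_1$ embedding via Gaussian projection and chaining.

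To extend to the nonuniform setting I would follow Arora--Lee--Naor and add a scale-by-scale chaining argument on top of ARV. At each scale defined by the measure induced by the demand graph $H$, one applies the ARV separated-sets lemma to the appropriately reweighted $\ell_2^2$ metric and then glues the resulting partial embeddings together using Bourgain-style padded decompositions. The number of scales relevant to any single pair is $O(\log \log n)$, which combined with the ARV bound of $O(\sqrt{\log n})$ per scale yields the claimed distortion $O(\sqrt{\log n}\log\log n)$. The hardest technical step is producing the separated sets with respect to the nonuniform measure while preserving the Gaussian-tail concentration estimates that drive ARV; this is precisely the source of the extra $\log\log n$ factor. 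Finally, decomposing the resulting $\ell_1$ embedding into a nonnegative combination of cut semimetrics and returning the cut $S$ that minimizes the ratio in \cref{def:sparsecut} gives the desired polynomial-time algorithm.
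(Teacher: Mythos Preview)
The paper does not prove this statement at all: \cref{thm:sparsecut} is quoted verbatim as Theorem~1.2 of \cite{ALN05} and used purely as a black box inside \cref{algo:polytime}. There is therefore no ``paper's own proof'' to compare against; the intended proof is simply a citation.

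Your sketch is a reasonable high-level outline of how \cite{ALN05} actually establishes the result (SDP relaxation with $\ell_2^2$ triangle inequalities, the ARV separated-sets structure theorem, and a measured-descent/chaining argument to handle nonuniform demands at the cost of the extra $\log\log n$ factor). For the purposes of this paper, however, none of that machinery needs to be reproduced: the correct ``proof'' here is a one-line reference to \cite{ALN05}.
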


The sparsest cut problem (up to constants) captures concepts such as expansion and conductance. A useful notion we will use is expansion among terminals in a graph $G$.
\begin{definition}[Terminal expansion]
\label{def:terminalexpander}
Given a graph $G$ with terminals $\T$, we say that $G$ is a \emph{$\phi$-terminal expander} if
\[ \frac{|E_G(S, V \bs S)|}{\min(|S \cap \T|, |(V \bs S) \cap \T|)} \ge \phi \forall S \subseteq V. \]
\end{definition}
Indeed, if we take $H = K_{\T}$, i.e. the complete graph over terminals $\T$ (with additional isolated vertices added so that $V(H) = V(G)$) the terminal expansion is within a factor of $2$ of $|\T|\sc (G, H)$.
\section{Polynomial Time Vertex Sparsification}
\label{sec:polytime}

In this section we will show \cref{thm:main,thm:exist}. We first give a high-level overview of the proof, and then describe the details.
\subsection{Overview of proof}
Let us recall the proof of \cite{CDL21}, which used the ideas of well-linkedness \cite{Chu12} and results of \cite{KW12}. For a graph $G = (V, E)$ with terminals $\T$, \cite{CDL21} computed a \emph{well-linked partition} of the graph, i.e. partitioned the set $X = V \bs \T$ into $X = X_1 \sqcup X_2 \sqcup \dots \sqcup X_p$ so that each $X_i$ either had at most $O(c)$ boundary edges (corresponding to terminals), or were connectivity-$c$ well-linked. Here, connectivity-$c$ well-linked (which we refer to as simply ``well-linked'' going forwards) was defined to mean that every cut in $X_i$ separating terminals had at least as many edges as the number of terminals of the smaller side. This guarantees that if $X_i$ is further partitioned along a cut that isn't well-linked, the number of terminals on both sides decreases. To finish, \cite{CDL21} showed that every well-linked set can be contracted to a single vertex, and for the remaining pieces $X_i$ with at most $O(c)$ boundary edges, directly cited the result of \cite{KW12} to construct a connectivity-$c$ mimicking network on each piece of size $O(c^3)$.

This approach does not lead to an algorithm which runs in time polynomial in $c$, as deciding whether a partition piece $X_i$ is well-linked is an instance of nonuniform sparsest cut (\cref{def:sparsecut}). Therefore, even if there is ever a piece $X_i$ that isn't well-linked, finding a cut which certifies that $X_i$ isn't well-linked (and thus makes progress) may require time exponential in $c$. However, if we use \cref{thm:sparsecut} to find an approximate sparse cut to partition and recurse along, the number of edges in the cut may be larger than the number of terminals on the smaller side, and thus after partitioning, the number of terminals increases in the recursive subproblem, preventing progress.

To get around this issue, our key observation is that we can open up the matroid and representative set tools from \cite{KW12} to prove in \cref{lemma:expanderc} a significantly improved bound on the size of a connectivity-$c$ mimicking network for any graph that is a $\phi$-terminal expander, i.e. the small side of any terminal cut of size at most $c$ has at most $c\phi^{-1}$ edges. In this way, if we choose $\phi = \frac{1}{C\log^2 n}$ say, we can certify that partition piece $X_i$ is a $\phi$-terminal expander or find a cut to make significant recursive progress.

\subsection{Proof of \cref{thm:main,thm:exist}}
We show our key lemma that if we have a bound on number of terminals in the smaller side of every cut with at most $c$ edges, then we immediately get a bound on the size of the a connectivity-$c$ mimicking network.
\begin{lemma}
\label{lemma:expanderc}
Let $G = (V, E)$ be a graph with $k$ terminals $\T$, each with degree $1$. Let $d \ge c$ be a parameter such that for every subset $S \subseteq V$ with $|E_G(S, V \bs S)| \le c$, we have $\min(|S \cap \T|, |(V \bs S) \cap \T|) \le d$. Then there is a set $F$ of at most $O(kcd)$ edges which covers all $c$-cuts in $G$ with terminals $\T$ which is computable in $n^{O(1)}$ time.
\end{lemma}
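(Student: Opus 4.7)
The plan is to adapt the matroid and representative-set argument of \cite{KW12}, opened up and specialized using the small-side hypothesis $|A|\le d$, to produce a collection $\J^*$ of $O(kd)$ representative minimum cuts whose edges cover every $c$-cut. Since each cut has at most $c$ edges, this gives the target $|F|=O(kcd)$.

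First I subdivide each edge of $G$, introducing a midpoint $m_e$ for every $e\in E(G)$, and let $M=\{m_e\}$. For every $A\subseteq\T$ with $|A|\le d$ and $\mincut_G(A,\T\bs A)\le c$, fix a minimum cut $F^A$ of size $c'=|F^A|\le c$ and let $M^A=\{m_e:e\in F^A\}$. By Menger there are $c'$ vertex-disjoint paths from $A$ to $M^A$ in the subdivided graph, so $M^A$ is independent in the gammoid $\M_\T$ on $M$ with sources $\T$, whose rank is at most $k$.

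Next I will construct a direct-sum representable matroid $\M=\M_1\oplus\cdots\oplus\M_p$ in which each pair $(A,F^A)$ embeds as an ordered tuple $X^A\in S_1\times\cdots\times S_p$ independent in $\M$. The $c$ edge-position matroids are derived from $\M_\T$, encoding the cut edges, while the remaining small-side-position matroids live on copies of $\T$ and are rank-$d$ uniform-type matroids that encode the constraint $|A|\le d$. Running \cref{lemma:repset} extracts $\J^*\subseteq\{X^A\}$ of size at most $\prod_i\rank(\M_i)$, and the quantitative heart of the proof is to arrange the $\M_i$ so that this product is $O(kd)$. To convert the abstract representative property into cut preservation, I pair it with a choice of test set $Y_A$ for each partition $(A,\T\bs A)$, designed so that ``$X$ disjoint from $Y_A$ and $X\cup Y_A$ independent in $\M$'' holds exactly when $X$ encodes a minimum cut between $A$ and $\T\bs A$. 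Polynomial runtime then follows from \cref{thm:gammoidrep} for the gammoid representation together with \cref{lemma:repset}.

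The main obstacle is the matroid construction: I must simultaneously realize (i) every relevant $(A,F^A)$ as an independent tuple, (ii) rank product $O(kd)$, and (iii) an alignment between the abstract representative property and genuine cut preservation. A naive direct sum of $c$ copies of $\M_\T$ already has rank product $k^c$, which is far too large; the expansion hypothesis is precisely what saves us, because the small side has only $d$ terminals and hence the small-side coordinates collapse to rank-$d$ matroids on $\T$, while the cut-edge gammoid copies can be truncated by the same bound on the disjoint-path structure from a small-side set. A secondary subtlety, paralleling \cite{KW12}, is the careful choice of $Y_A$ so that independence in the direct-sum matroid exactly detects minimum cuts between the intended partitions. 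Once both are dealt with, $|\J^*|=O(kd)$ cuts of size at most $c$ give $|F|=O(kcd)$ edges in $n^{O(1)}$ time.
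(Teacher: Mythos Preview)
Your plan has a genuine gap that blocks the polynomial-time claim, and the matroid construction you sketch does not actually achieve the rank product you need.

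First, the runtime. You index $\J$ by subsets $A\subseteq\T$ with $|A|\le d$ and $\mincut_G(A,\T\bs A)\le c$, fixing one minimum cut $F^A$ per such $A$. There can be $\binom{k}{d}$ such subsets, so $|\J|$ is exponential in $d$; since \cref{lemma:repset} runs in time $|\J|^{O(1)}$, and you must build $\J$ before invoking it, this kills $n^{O(1)}$ time outright. The paper avoids this by indexing $\J$ by \emph{single vertices} (after an edge-to-vertex reduction), namely $\J=\{(v,v',v''):v\in V(G_\split)\}$, so $|\J|$ is polynomial. The bound $|\J^*|\le O(kcd)$ then directly bounds the number of essential split vertices, i.e.\ edges of $G$ that cannot be contracted, and one iterates by contracting a single nonessential edge and repeating.

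Second, the matroid design. You propose $c$ ``edge-position'' gammoid copies plus further ``small-side-position'' rank-$d$ matroids, and claim the expansion hypothesis collapses the product to $O(kd)$. But \cref{lemma:repset} requires each element of $\J$ to be a tuple with exactly one entry per summand, so with $p\ge c$ summands the rank product is at least a product of $c$ factors, each at least $1$; there is no mechanism in your sketch that makes the gammoid copies collapse to rank $O(1)$. The paper uses exactly three summands: a uniform matroid of rank $c$, a gammoid of rank $\le k$ on a graph with sink-only copies, and a gammoid truncated to rank $c+d$ on a graph with source-only copies. The product $c\cdot k\cdot(c+d)=O(kcd)$ falls out immediately, and the small-side hypothesis enters only to justify truncating the third matroid at rank $c+d$ (since one may assume $|B|\le d$). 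The alignment between ``$X\cup Y$ independent'' and ``$v$ is essential'' is handled by a separate flow-augmentation lemma (\cref{lemma:augment}), not by encoding the entire cut in the tuple.

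In short, the key idea you are missing is to make each element of $\J$ a \emph{single} candidate essential edge rather than an entire cut; this is what keeps $|\J|$ polynomial and what allows a constant number of matroid summands whose ranks multiply to $O(kcd)$.
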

To show this, we first reduce to the setting of vertex cuts by turning each edge in $G$ into a vertex, as gammoids are defined in terms of vertex cuts. Call the new graph $G_\split$. In the setting of vertex cuts, we say that a vertex $v$ is \emph{essential} if there exists a partition of the terminals $\T = A \sqcup B$ such that $v$ is in all minimum vertex cuts between $A, B$ in $G$. If there is some nonessential vertex $v$, then the edge corresponding to $v$ in the original graph $G$ may be contracted without affecting any minimum cuts. We formalize these definitions and observations in the following claims.
\begin{definition}[Vertex cuts]
Given a directed graph $G$ and disjoint sets $A, B \subseteq V$, the size of the minimum vertex cut between $A$ and $B$ is the smallest size of a set $C \subseteq V$ such that there are no paths between $A$ and $B$ in $G \bs C$. In particular, $C$ \emph{may intersect} $A \cup B$.
\end{definition}
By Menger's Theorem, the minimum vertex cut between $A$ and $B$ is the maximum number of vertex disjoint flow paths between $A$ and $B$ in $G$.
\begin{definition}[Essential vertices]
\label{def:essential}
Given a directed graph $G$, terminals $\T$, and a connectivity parameter $c$, we say that a vertex $v$ is \emph{essential} if there exists a partition $\T = A \sqcup B$ such that the minimum vertex cut between $A$ and $B$ has size at most $c$ and $v$ is involved in all minimum vertex cuts between $A$ and $B$ in $G$.
\end{definition}
\begin{lemma}
\label{lemma:augment}
Consider a directed graph $G$, terminals $\T$, connectivity parameter $c$, partition $\T = A \sqcup B$, and a vertex $v \in V$.
Let $v'$ be a sink-only copy of $v$, i.e., $v'$ has in-edges from all in-neighbors of $v$, and let $v''$ be a source-only copy of $v$, i.e., $v''$ has out-edges edges to all out-neighbors of $v$.
If $v$ is essential to the cut $(A, B)$, then there exists a minimum vertex cut $C \subseteq V$ with at most $c$ vertices and $v \in C$ such that there are $|C|+1$ vertex disjoint paths from $A$ to $C \cup v'$ and $|C|+1$ vertex disjoint paths from $C \cup v''$ to $B$.
\end{lemma}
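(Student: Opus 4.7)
My plan is to take $C$ to be any minimum $A$-to-$B$ vertex cut in $G$. By the essentialness hypothesis this $C$ automatically satisfies $|C| \le c$ and $v \in C$, so the existential requirement of the lemma is trivially met. The second claim (the $|C|+1$ vertex disjoint paths from $C \cup v''$ to $B$) will follow from the first (the $|C|+1$ vertex disjoint paths from $A$ to $C \cup v'$) by reversing all edge directions and exchanging the roles of $A$ and $B$, so I focus on the first claim. By Menger's theorem in the vertex-capacity setting (as invoked just below \cref{def:essential}), producing $|C|+1$ vertex disjoint paths from $A$ to $C \cup \{v'\}$ in $\hat G$ is equivalent to showing that every $A$-to-$(C \cup \{v'\})$ vertex cut in $\hat G$ has size at least $|C|+1$.

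I would prove this by contradiction. Suppose $C^* \subseteq V(\hat G)$ is a cut with $|C^*| \le |C|$ separating $A$ from $C \cup \{v'\}$ in $\hat G$. Because $C$ is an $A$-$B$ cut in $G$, every $A$-$B$ path in $G$ meets $C$, so any set disconnecting $A$ from $C$ in $\hat G$ also disconnects $A$ from $B$ in $G$; hence $C^*$ is an $A$-$B$ cut of size at most $|C|$, so it is itself a minimum $A$-$B$ cut, and essentialness forces $v \in C^*$. Moreover, the only in-neighbors of $v'$ in $\hat G$ are the in-neighbors of $v$ in $G$, so the hypothesis that $C^*$ separates $A$ from $v'$ in $\hat G$ implies that no in-neighbor $u$ of $v$ with $u \notin C^*$ is reachable from $A$ in $G \setminus C^*$.

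The contradiction will come from analyzing $G \setminus (C^* \setminus \{v\})$: this deletes only $|C|-1$ vertices, strictly less than the minimum $A$-$B$ cut size, so $A$ and $B$ remain connected; but every $A$-$B$ path in $G$ must meet $C^*$, and since $v$ is the only vertex of $C^*$ still present in $G \setminus (C^* \setminus \{v\})$, every $A$-$B$ path in this smaller graph must pass through $v$. Fixing any such path and letting $u$ be the in-neighbor of $v$ immediately preceding $v$ on it, the prefix from $A$ to $u$ is a simple path that avoids $v$ and lies in $G \setminus (C^* \setminus \{v\})$, hence lies in $G \setminus C^*$; this exhibits an in-neighbor $u \notin C^*$ of $v$ reachable from $A$ in $G \setminus C^*$, contradicting the previous paragraph. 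The main subtlety I anticipate is the case where $v$ or vertices of $C^*$ lie in $A \cup B$, but this is absorbed since the whole argument uses only vertex deletions and Menger's theorem in the form permitting the cut to meet the source and sink sets; essentialness is used in exactly one place, namely to deduce $v \in C^*$.
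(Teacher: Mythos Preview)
Your argument is correct and follows essentially the same route as the paper's proof: assume a separator $C^*$ of size at most $|C|$ between $A$ and $C\cup\{v'\}$, use essentialness of $v$ to force $v\in C^*$, and then obtain a contradiction by finding an $A$-path reaching an in-neighbor of $v$ inside $G\setminus C^*$ via the graph $G\setminus(C^*\setminus\{v\})$. The paper organizes the same ideas as a short case split (on whether $v'\in C'$ and whether $v\in C'$) before arriving at the identical endgame. One small point you glossed over that the paper treats explicitly is the possibility $v'\in C^*$: in that case $C^*\setminus\{v'\}\subseteq V(G)$ already separates $A$ from $C$ in $G$ with at most $|C|-1$ vertices, contradicting that there are $|C|$ vertex-disjoint $A$--$C$ paths; once $v'\notin C^*$ is established, your deduction that $C^*$ is a minimum $A$--$B$ cut in $G$ (and hence $v\in C^*$) goes through cleanly.
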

\begin{proof}
Assume for contradiction that there are less than $|C|+1$ vertex disjoint paths from $A$ to $C \cup v'$. Then there is a minimal set $C'$ with $|C'| = C$ such that $A$ is separated from $C \cup v'$ in $V(G) \bs C'$. If $v' \in C'$, then clearly $C' \bs v'$ separates $A$ from $C$, contradicting that the minimum cut between $A$ and $C$ is size $|C|$. So we may assume that $v' \notin C'$. If $v \notin C'$, $C'$ separates $A$ from $C$ and $v \notin C'$, so $v$ is not essential. So we may assume that $v' \notin C'$ and $v \in C'$.

We claim that there is a path from $A$ to $v$ in $G \bs (C' \bs v)$. Indeed, $(C' \bs v)$ cannot separate $A$ from $C$ as there are $|C|$ vertex disjoint paths from $A$ to $C$, but $C'$ does separate $A$ from $C$. Hence, there must be a path from $A$ to $v$ in $G \bs (C' \bs v)$. As $v'$ is a sink-only copy of $v$, there must be a path from $A$ to $v'$ in $G \bs C'$, as desired. The symmetric argument applies to $B$ and $C \cup v''$.
\end{proof}

To bound the number of essential vertices, we apply \cref{lemma:repset} for carefully constructing matroids $\M_1, \M_2, \M_3$. This suffices, as we can pick any single nonessential vertex to contract in the original graph, and then repeat the argument on the contracted graph with one less edge.
\begin{proof}[Proof of \cref{lemma:expanderc}]
We first explain a reduction to vertex cuts.
\paragraph{Reduction to vertex cuts.} To do this, turn each edge in $G$ into a vertex, by putting a vertex in the middle of the edge. We call these \emph{split vertices}. For each non-terminal vertex $v \in V(G)$, replace it with a complete graph with $2c$ vertices, and connect each of these $2c$ vertices to the corresponding split vertices of the neighboring edges. Call this new graph $G_\split$. We leave the terminals untouched. Now, for a terminal partition $\T = A \sqcup B$, note that for a vertex cut of size at most $c$ between $A$ and $B$ in $G_\split$ will not involve any vertices in the complete graphs of size $2c$, as there is no reason to contain one of these vertices and not the rest. In particular, it is impossible to remove a proper subset of a clique of size $2c$ and disconnect any pair of neighbors, because all vertices in the clique have the same neighborhood. Additionally, vertex cuts that involve terminal vertices in $G_\split$ correspond to picking the edges adjacent to terminals in $G$, as each terminal in $\T$ has degree $1$. In this way, there is a bijection between edge cuts with at most $c$ edges between terminals in $G$ and vertex cuts with at most $c$ vertices between terminals in $G_\split$.

Additionally, note that if some split vertex $v$ is nonessential in $G_\split$ for terminals $\T$ and connectivity $c$, then we may contract its corresponding edge in $G$ and maintain all terminal cuts with size at most $c$ in $G$. In this way, it suffices to upper bound the number of essential vertices in $G_\split$. We do this by using representative sets.

\paragraph{Building the direct sum matroid.} As in \cref{lemma:augment}, create a graph $G_\split'$ which is a copy of $G$ with additional sink-only copies $v'$ for each non-terminal vertex $v \in V(G_\split)$, and $G_\split''$ with source-only copies $v''$. Define matroids $\M_1 = (V, \binom{V}{\le c})$ with ground set $V$, and independent sets are all sets of size at most $c$. Define $\M_2$ to be the gammoid with vertex subset $\T$ on $G_\split'$, and let $\M_3$ be the gammoid with vertex subset $\T$ on $G_\split''$, except restrict all independent sets down to size $c+d$. Define $\M = \M_1 \oplus \M_2 \oplus \M_3$. As $\M_1$ is trivially representable, and $\M_2, \M_3$ are representable by \cref{thm:gammoidrep}, we have that $\M$ is representable.

\paragraph{Bounding essential vertices via representative sets.} In the setting of \cref{lemma:repset}, define $\J = \{(v, v', v'') : v \in V \},$ where $v'$ and $v''$ are the sink-only and source-only copies of $v$. Let $\J^*$ be a representative set for $\J$. We show that any essential vertex $v$ must correspond to a set in $\J^*$. Indeed, consider an essential vertex $v$ involved in a minimum vertex cut $C$ between $A$ and $B$, where $\T = A \sqcup B$. We may assume that $|B| \le d$ by the problem condition. Consider the independent set $Y = (C \bs v, A \cup C, B \cup C)$ in $\M$, where $A \cup C$ is independent in $\M_2$ because there are $|C|$ vertex disjoint flow paths from $B$ to $C$ by the maxflow-mincut theorem, and similarly for $B \cup C$ in $\M_3$, as $|B \cup C| \le c+d$. By \cref{lemma:augment}, we have that for $X = (v, v', v'')$ that $X \sqcup Y$ is independent. Thus, there exists a set $X^* = (u, u', u'') \in \J^*$ such that $X^* \sqcup Y$ is independent. Note that if $A \cup C \cup u'$ is independent, there must be $|C|+1$ paths from $B$ to $C \cup u'$, hence $u$ must be on the same side of the cut as $B$ in $G_\split \bs C$. If $B \cup C \cup u''$ is independent, then $u$ must be on the same side as $A$, hence $u \in C$. But $u$ must also be disjoint from $C \bs v$, hence $u = v$. So all essential vertices $v$ must correspond to triples in $\J^*$.

\paragraph{Size and runtime bound.} As $|\J^*| \le \rank(\M_1)\rank(\M_2)\rank(\M_3) \le ck(c+d) \le O(kcd)$, there are at most $O(kcd)$ essential vertices $v$. At the end, all uncontracted split vertices / edges of the original graph $G$ cover all $c$-cuts by construction. The runtime is polynomial as all representations of $\M_1, \M_2, \M_3, \M$ can be computed in polynomial time, and $\J^*$ is computable in polynomial time by \cref{lemma:repset}.
\end{proof}

We can now prove \cref{thm:exist}, using \cref{algo:exist}. The algorithm maintains a partition $X = X_1 \sqcup \dots \sqcup X_p$, and refines the partition until a stopping condition. For every piece $X_i = (V_i, E_i)$, we define its vertex, edge, and terminal set $\T_i$ as in \cref{lemma:partition}. Precisely, its terminal set consists of the boundary edges in $\partial(X_i)$ and the edges are those in $E(X_i)$ plus boundary edges.
\begin{algorithm2e}[h]
\caption{\label{algo:exist} Given a graph $G = (V, E)$ with all $k$ terminals $\T$ with degree $1$, returns a set $F$ of $O(kc^2)$ edges which covers all $c$-cuts in $G$ with terminals $\T$.}
\SetKwProg{Proc}{procedure}{}{}
\Proc{\Existence$(G, \T)$}{
	Initialize $X \assign V \bs \T$, and maintain a partition $X = X_1 \sqcup \dots \sqcup X_{p'}$ at all times. \Comment{Notationally, we let $X_i$ has vertex set $V_i$, edges $E_i$, and terminals $\T_i$.} \\
	\While{There is $i \in [p']$ and $S \subseteq V_i$ such that $|E_{X_i}(S, V_i \bs S)| \le c$ and $\min(|S \cap \T_i|, |(V_i \bs S) \cap \T_i|) \ge 3c$}{
		Replace $X_i$ in the partition with $X_i \cap S$ and $X_i \cap (V_i \bs S)$. \\
	}
	For $1 \le i \le p$ apply \cref{lemma:expanderc} to find a set $F_i$ of edges which covers all $c$-cuts in $X_i$ with terminals $\T_i$. \\
	\Return $F \defeq \bigcup_{i=1}^p F_i$.
}
\end{algorithm2e}

\begin{proof}[Proof of \cref{thm:exist}]
Given a graph $G$ with all $k$ terminals $\T$ with degree $1$, we show that \cref{algo:exist} returns a set $F$ of $O(kc^2)$ edges which covers all $c$-cuts in $G$ with terminals $\T$, which we can contract onto to get a connectivity-$c$ mimicking network.
This shows \cref{thm:exist}, as recall that in \cref{sec:prelim} we showed that we can reduce to the case where all terminals have degree $1$ by increasing the number of terminals by a factor of $c$.

The correctness of the algorithm follows from \cref{lemma:partition}, specifically that it suffices to compute a partition of $X$ and then take a union of the edges $F_i$ which cover all $c$-cuts on each $X_i$. It remains to bound the size of $F$.

We first bound the number of pieces $p$ by setting up a potential function over the partition $\psi(X_1, X_2, \dots, X_{p'}) \defeq \sum_{i=1}^{p'} (|\T_i|-3c)$. Note that at the beginning $\psi(X) = k-3c$. Note that when $X_i$ is split into $A = X_i \cap S$ and $B = X_i \cap (V_i \bs S)$, $A$ now has $|S \cap \T_i| + |E_{X_i}(S, V_i \bs S)| \ge 3c$ terminals, and $B$ has $|(V \bs S) \cap \T_i| + |E_{X_i}(S, V_i \bs S)| \ge 3c$ terminals by the cutting condition in the while statement of \cref{algo:exist}. Hence $\psi \ge 0$ always. Additionally, note that when $X_i$ is split into $A$ and $B$, $\psi$ decreases by
\begin{align*}
&(|\T_i|-3c) - (|S \cap \T_i| + |E_{X_i}(S, V_i \bs S)|-3c) - (|(V \bs S) \cap \T_i| + |E_{X_i}(S, V_i \bs S)| - 3c) \\
&= 3c - 2|E_{X_i}(S, V_i \bs S)| \ge c.
\end{align*}
Hence there can at most $\psi(X)/c$ splits, so $p \le k/c$. By this computation, each split increases the number of terminals by at most $2c$, hence at the end we have $\sum_{i=1}^p |\T_i| \le k + pc \le 3k$.

To finish, note that by \cref{lemma:expanderc} for $d = 3c$ we get that
\[ |F| \le \sum_{i=1}^p |F_i| \le \sum_{i=1}^p O(|\T_i|cd) = O(kc^2) \] as desired.
\end{proof}

We can change the cutting condition in the above proof and apply a sparsest cut approximation algorithm to show \cref{thm:main}.
\begin{algorithm2e}[h]
\caption{\label{algo:polytime} Given a graph $G = (V, E)$ with all $k$ terminals $\T$ with degree $1$, returns a set $F$ of $O(kc^2\sc_n\log n)$ edges which covers all $c$-cuts in $G$ with terminals $\T$ in polynomial time.}
\SetKwProg{Proc}{procedure}{}{}
\Proc{\PolyTimeNetwork$(G, \T)$}{
	Initialize $X \assign V \bs \T$, and maintain a partition $X = X_1 \sqcup \dots \sqcup X_{p'}$ at all times. \Comment{Notationally, we let $X_i$ has vertex set $V_i$, edges $E_i$, and terminals $\T_i$.} \\
	$\phi \assign \frac{1}{10\sc_n\log n}$. \Comment{Expansion parameter.} \\
	\While{There is $i \in [p']$ such that $X_i$ is not a $\phi$-terminal expander}{
		Compute $S \subseteq V_i$ such that $\frac{|E_{X_i}(S, V_i \bs S)|}{\min(|S \cap \T_i|, |(V_i \bs S) \cap \T_i|)} \le \sc_n \phi$ using \cref{thm:sparsecut}. \label{line:sparsecut} \\
		Replace $X_i$ in the partition with $X_i \cap S$ and $X_i \cap (V_i \bs S)$. \\
	}
	For $1 \le i \le p$ apply \cref{lemma:expanderc} to find a set $F_i$ of edges which covers all $c$-cuts in $X_i$ with terminals $\T_i$. \label{line:matroid} \\
	\Return $F \defeq \bigcup_{i=1}^p F_i$.
}
\end{algorithm2e}

\begin{proof}[Proof of \cref{thm:main}]
Given a graph $G$ with all $k$ terminals $\T$ with degree $1$, we show that \cref{algo:polytime} returns a set $F$ of $O(kc^2\sc_n\log n)$ edges which covers all $c$-cuts in $G$ with terminals $\T$, which we can contract onto to get a connectivity-$c$ mimicking network.
This shows \cref{thm:main}, as recall that in \cref{sec:prelim} we showed that we can reduce to the case where all terminals have degree $1$ by increasing the number of terminals by a factor of $c$.

Correctness follows from \cref{lemma:partition} along with \cref{thm:sparsecut}, specifically that there is a polynomial time algorithm to compute the set $S$ desired in line \ref{line:sparsecut} of \cref{algo:polytime} assuming that $X_i$ is not a $\phi$-terminal expander. As both line \ref{line:sparsecut} and line \ref{line:matroid} run in polynomial time, the whole algorithm runs in polynomial time.

We now bound the size of $F$, by first bounding $\sum_{i=1}^p |\T_i| = O(|\T|)$. To show this, we use a standard argument in bounding the number of edges cut in an expander decomposition. Consider a piece $X'$ with terminals $\T'$ in a partition of $X$, and consider how it gets partitioned further. Imagine running the partitioning procedure in \cref{algo:polytime} on $X'$ until each piece has at most $2|\T'|/3$ terminals. Let $t_1, t_2, \dots, t_j$ denote the number of terminals on the smaller side in each step of the partition. Note that the number of terminals in the larger piece decreases by at least $4t_i/5$ during step $i$, as $\frac{1}{5\log n}t_i \le t_i/5$. Thus, we know that $\sum_i t_i \le 5|\T'|/4.$ During the partition, the number of new terminals added is at most $\sum_{i=1}^j 2\phi\sc_n t_i \le \frac{1}{5\log n} \cdot 5|\T'|/4 = \frac{1}{4\log n}|\T'|$ by the cutting condition. Therefore, at the bottom level, the total number of terminals is at most $\left(1 + \frac{1}{4\log n}\right)^{\log_{3/2}n}|\T| = O(|\T|)$.

Because each $X_i$ at the end is a $\phi$-terminal expander, we may take $d = \phi^{-1}c$ in \cref{lemma:expanderc}, so we get a final bound of
\[ |F| \le \sum_{i=1}^p O(|\T_i|cd) = O(kc^2\phi^{-1}) = O(kc^2 \sc_n\log n). \]
\end{proof}

\section*{Acknowledgments}
The author would like to thank Yunbum Kook for feedback on an earlier version of this manuscript, and Richard Peng for useful discussions and encouragement.

{\small
\bibliographystyle{alpha}
\bibliography{refs}}

\newcommand{\etalchar}[1]{$^{#1}$}
\begin{thebibliography}{ADK{\etalchar{+}}16}

\bibitem[ADK{\etalchar{+}}16]{ADKKP16}
I.~{Abraham}, D.~{Durfee}, I.~{Koutis}, S.~{Krinninger}, and R.~{Peng}.
\newblock On fully dynamic graph sparsifiers.
\newblock In {\em 2016 IEEE 57th Annual Symposium on Foundations of Computer
  Science (FOCS)}, pages 335--344, Oct 2016.

\bibitem[AGK14]{AGK14}
Alexandr Andoni, Anupam Gupta, and Robert Krauthgamer.
\newblock Towards (1+ $\varepsilon$)-approximate flow sparsifiers.
\newblock In {\em Proceedings of the twenty-fifth annual ACM-SIAM symposium on
  Discrete algorithms}, pages 279--293. Society for Industrial and Applied
  Mathematics, 2014.

\bibitem[AHK04]{AHK04}
Sanjeev Arora, Elad Hazan, and Satyen Kale.
\newblock $\widetilde{O}(\sqrt{\log n})$ approximation to {SPARSEST} {CUT} in
  $\widetilde{O}(n^2)$ time.
\newblock In {\em 45th Symposium on Foundations of Computer Science {(FOCS}
  2004), 17-19 October 2004, Rome, Italy, Proceedings}, pages 238--247. {IEEE}
  Computer Society, 2004.

\bibitem[ALN05]{ALN05}
Sanjeev Arora, James~R. Lee, and Assaf Naor.
\newblock Euclidean distortion and the sparsest cut.
\newblock In Harold~N. Gabow and Ronald Fagin, editors, {\em Proceedings of the
  37th Annual {ACM} Symposium on Theory of Computing, Baltimore, MD, USA, May
  22-24, 2005}, pages 553--562. {ACM}, 2005.

\bibitem[ARV09]{ARV09}
Sanjeev Arora, Satish Rao, and Umesh~V. Vazirani.
\newblock Expander flows, geometric embeddings and graph partitioning.
\newblock {\em J. {ACM}}, 56(2):5:1--5:37, 2009.

\bibitem[BK96]{BK96}
Andr{\'{a}}s~A. Bencz{\'{u}}r and David~R. Karger.
\newblock Approximating \emph{s-t} minimum cuts in
  \emph{{\~{O}}}(\emph{n}\({}^{\mbox{2}}\)) time.
\newblock In Gary~L. Miller, editor, {\em Proceedings of the Twenty-Eighth
  Annual {ACM} Symposium on the Theory of Computing, Philadelphia,
  Pennsylvania, USA, May 22-24, 1996}, pages 47--55. {ACM}, 1996.

\bibitem[CDL{\etalchar{+}}21]{CDL21}
Parinya Chalermsook, Syamantak Das, Bundit Laekhanukit, Yunbum Kook, Yang~P.
  Liu, Richard Peng, Mark Sellke, and Daniel Vaz.
\newblock Vertex sparsification for edge connectivity.
\newblock {\em SODA}, abs/2007.07862, 2021.

\bibitem[CGL{\etalchar{+}}19]{CGL20}
Julia Chuzhoy, Yu~Gao, Jason Li, Danupon Nanongkai, Richard Peng, and
  Thatchaphol Saranurak.
\newblock A deterministic algorithm for balanced cut with applications to
  dynamic connectivity, flows, and beyond.
\newblock {\em CoRR}, abs/1910.08025, 2019.

\bibitem[CGP{\etalchar{+}}18]{CGP18}
Timothy Chu, Yu~Gao, Richard Peng, Sushant Sachdeva, Saurabh Sawlani, and
  Junxing Wang.
\newblock Graph sparsification, spectral sketches, and faster resistance
  computation, via short cycle decompositions.
\newblock In Mikkel Thorup, editor, {\em 59th {IEEE} Annual Symposium on
  Foundations of Computer Science, {FOCS} 2018, Paris, France, October 7-9,
  2018}, pages 361--372. {IEEE} Computer Society, 2018.

\bibitem[Che70]{Cheeger70}
Jeff Cheeger.
\newblock A lower bound for the smallest eigenvalue of the {L}aplacian.
\newblock In {\em Problems in analysis ({P}apers dedicated to {S}alomon
  {B}ochner, 1969)}, pages 195--199. 1970.

\bibitem[Che89]{Chew89}
Paul Chew.
\newblock There are planar graphs almost as good as the complete graph.
\newblock {\em J. Comput. Syst. Sci.}, 39(2):205--219, 1989.

\bibitem[Chu12]{Chu12}
Julia Chuzhoy.
\newblock On vertex sparsifiers with steiner nodes.
\newblock In Howard~J. Karloff and Toniann Pitassi, editors, {\em Proceedings
  of the 44th Symposium on Theory of Computing Conference, {STOC} 2012, New
  York, NY, USA, May 19 - 22, 2012}, pages 673--688. {ACM}, 2012.

\bibitem[CLLM10]{CLLM10}
Moses Charikar, Tom Leighton, Shi Li, and Ankur Moitra.
\newblock Vertex sparsifiers and abstract rounding algorithms.
\newblock In {\em 51th Annual {IEEE} Symposium on Foundations of Computer
  Science, {FOCS} 2010, October 23-26, 2010, Las Vegas, Nevada, {USA}}, pages
  265--274, 2010.

\bibitem[CSWZ00]{CSWZ00}
Shiva Chaudhuri, K.~V. Subrahmanyam, Frank Wagner, and Christos~D. Zaroliagis.
\newblock Computing mimicking networks.
\newblock {\em Algorithmica}, 26(1):31--49, 2000.

\bibitem[EGK{\etalchar{+}}14]{EGKRTT14}
Matthias Englert, Anupam Gupta, Robert Krauthgamer, Harald Racke, Inbal
  Talgam-Cohen, and Kunal Talwar.
\newblock Vertex sparsifiers: New results from old techniques.
\newblock {\em SIAM Journal on Computing}, 43(4):1239--1262, 2014.

\bibitem[HKNR98]{HKNR98}
Torben Hagerup, Jyrki Katajainen, Naomi Nishimura, and Prabhakar Ragde.
\newblock Characterizing multiterminal flow networks and computing flows in
  networks of small treewidth.
\newblock {\em Journal of Computer and System Sciences}, 57(3):366--375, 1998.

\bibitem[JS20]{JS20}
Wenyu Jin and Xiaorui Sun.
\newblock Fully dynamic c-edge connectivity in subpolynomial time.
\newblock {\em CoRR}, abs/2004.07650, 2020.

\bibitem[KLOS14]{KLOS14}
Jonathan~A. Kelner, Yin~Tat Lee, Lorenzo Orecchia, and Aaron Sidford.
\newblock An almost-linear-time algorithm for approximate max flow in
  undirected graphs, and its multicommodity generalizations.
\newblock In {\em Proceedings of the Twenty-Fifth Annual {ACM-SIAM} Symposium
  on Discrete Algorithms, {SODA} 2014, Portland, Oregon, USA, January 5-7,
  2014}, pages 217--226, 2014.

\bibitem[KLP{\etalchar{+}}16]{KLPSS16}
Rasmus Kyng, Yin~Tat Lee, Richard Peng, Sushant Sachdeva, and Daniel~A.
  Spielman.
\newblock Sparsified cholesky and multigrid solvers for connection laplacians.
\newblock In Daniel Wichs and Yishay Mansour, editors, {\em Proceedings of the
  48th Annual {ACM} {SIGACT} Symposium on Theory of Computing, {STOC} 2016,
  Cambridge, MA, USA, June 18-21, 2016}, pages 842--850. {ACM}, 2016.

\bibitem[KPSW19]{KPSW19}
Rasmus Kyng, Richard Peng, Sushant Sachdeva, and Di~Wang.
\newblock Flows in almost linear time via adaptive preconditioning.
\newblock In Moses Charikar and Edith Cohen, editors, {\em Proceedings of the
  51st Annual {ACM} {SIGACT} Symposium on Theory of Computing, {STOC} 2019,
  Phoenix, AZ, USA, June 23-26, 2019}, pages 902--913. {ACM}, 2019.

\bibitem[KPZ17]{KPZ17}
Nikolai Karpov, Marcin Pilipczuk, and Anna Zych{-}Pawlewicz.
\newblock An exponential lower bound for cut sparsifiers in planar graphs.
\newblock In Daniel Lokshtanov and Naomi Nishimura, editors, {\em 12th
  International Symposium on Parameterized and Exact Computation, {IPEC} 2017,
  September 6-8, 2017, Vienna, Austria}, volume~89 of {\em LIPIcs}, pages
  24:1--24:11. Schloss Dagstuhl - Leibniz-Zentrum f{\"{u}}r Informatik, 2017.

\bibitem[KR13]{KR13}
Robert Krauthgamer and Inbal Rika.
\newblock Mimicking networks and succinct representations of terminal cuts.
\newblock In {\em Proceedings of the twenty-fourth annual ACM-SIAM symposium on
  Discrete algorithms}, pages 1789--1799. SIAM, 2013.

\bibitem[KR14]{KR14}
Arindam Khan and Prasad Raghavendra.
\newblock On mimicking networks representing minimum terminal cuts.
\newblock {\em Information Processing Letters}, 114(7):365--371, 2014.

\bibitem[KRTV12]{KRTV12}
Arindam Khan, Prasad Raghavendra, Prasad Tetali, and L{\'{a}}szl{\'{o}}~A.
  V{\'{e}}gh.
\newblock On mimicking networks representing minimum terminal cuts.
\newblock {\em CoRR}, abs/1207.6371, 2012.

\bibitem[KRV06]{KRV06}
Rohit Khandekar, Satish Rao, and Umesh~V. Vazirani.
\newblock Graph partitioning using single commodity flows.
\newblock In Jon~M. Kleinberg, editor, {\em Proceedings of the 38th Annual
  {ACM} Symposium on Theory of Computing, Seattle, WA, USA, May 21-23, 2006},
  pages 385--390. {ACM}, 2006.

\bibitem[KW12]{KW12}
Stefan Kratsch and Magnus Wahlstr{\"o}m.
\newblock Representative sets and irrelevant vertices: New tools for
  kernelization.
\newblock In {\em 2012 IEEE 53rd Annual Symposium on Foundations of Computer
  Science}, pages 450--459. IEEE, 2012.

\bibitem[LM10]{LM10}
F~Thomson Leighton and Ankur Moitra.
\newblock Extensions and limits to vertex sparsification.
\newblock In {\em Proceedings of the forty-second ACM symposium on Theory of
  computing}, pages 47--56. ACM, 2010.

\bibitem[Lov77]{Lovasz77}
L.~Lov\'{a}sz.
\newblock Flats in matroids and geometric graphs.
\newblock In {\em Combinatorial surveys ({P}roc. {S}ixth {B}ritish
  {C}ombinatorial {C}onf., {R}oyal {H}olloway {C}oll., {E}gham, 1977)}, pages
  45--86, 1977.

\bibitem[LR99]{LR99}
Tom Leighton and Satish Rao.
\newblock Multicommodity max-flow min-cut theorems and their use in designing
  approximation algorithms.
\newblock {\em J. ACM}, 46(6):787--832, 1999.

\bibitem[Mar09]{Marx09}
D{\'{a}}niel Marx.
\newblock A parameterized view on matroid optimization problems.
\newblock {\em Theor. Comput. Sci.}, 410(44):4471--4479, 2009.

\bibitem[MM10]{MM10}
Konstantin Makarychev and Yury Makarychev.
\newblock Metric extension operators, vertex sparsifiers and lipschitz
  extendability.
\newblock In {\em 51th Annual {IEEE} Symposium on Foundations of Computer
  Science, {FOCS} 2010, October 23-26, 2010, Las Vegas, Nevada, {USA}}, pages
  255--264, 2010.

\bibitem[Moi09]{Moitra09}
Ankur Moitra.
\newblock Approximation algorithms for multicommodity-type problems with
  guarantees independent of the graph size.
\newblock In {\em 2009 50th Annual IEEE Symposium on Foundations of Computer
  Science}, pages 3--12. IEEE, 2009.

\bibitem[NS17]{NS17}
Danupon Nanongkai and Thatchaphol Saranurak.
\newblock Dynamic spanning forest with worst-case update time: adaptive, las
  vegas, and o(n\({}^{\mbox{1/2 - {\(\epsilon\)}}}\))-time.
\newblock In Hamed Hatami, Pierre McKenzie, and Valerie King, editors, {\em
  Proceedings of the 49th Annual {ACM} {SIGACT} Symposium on Theory of
  Computing, {STOC} 2017, Montreal, QC, Canada, June 19-23, 2017}, pages
  1122--1129. {ACM}, 2017.

\bibitem[RST14]{RST14}
Harald R{\"{a}}cke, Chintan Shah, and Hanjo T{\"{a}}ubig.
\newblock Computing cut-based hierarchical decompositions in almost linear
  time.
\newblock In Chandra Chekuri, editor, {\em Proceedings of the Twenty-Fifth
  Annual {ACM-SIAM} Symposium on Discrete Algorithms, {SODA} 2014, Portland,
  Oregon, USA, January 5-7, 2014}, pages 227--238. {SIAM}, 2014.

\bibitem[She09]{Sherman09}
Jonah Sherman.
\newblock Breaking the multicommodity flow barrier for
  sqrt(log(n))-approximations to sparsest cut.
\newblock {\em CoRR}, abs/0908.1379, 2009.

\bibitem[ST04]{ST04}
Daniel~A. Spielman and Shang{-}Hua Teng.
\newblock Nearly-linear time algorithms for graph partitioning, graph
  sparsification, and solving linear systems.
\newblock In L{\'{a}}szl{\'{o}} Babai, editor, {\em Proceedings of the 36th
  Annual {ACM} Symposium on Theory of Computing, Chicago, IL, USA, June 13-16,
  2004}, pages 81--90. {ACM}, 2004.

\bibitem[SW19]{SW19}
Thatchaphol Saranurak and Di~Wang.
\newblock Expander decomposition and pruning: Faster, stronger, and simpler.
\newblock In Timothy~M. Chan, editor, {\em Proceedings of the Thirtieth Annual
  {ACM-SIAM} Symposium on Discrete Algorithms, {SODA} 2019, San Diego,
  California, USA, January 6-9, 2019}, pages 2616--2635. {SIAM}, 2019.

\bibitem[Wah20]{W20}
Magnus Wahlstr{\"{o}}m.
\newblock On quasipolynomial multicut-mimicking networks and kernelization of
  multiway cut problems.
\newblock In Artur Czumaj, Anuj Dawar, and Emanuela Merelli, editors, {\em 47th
  International Colloquium on Automata, Languages, and Programming, {ICALP}
  2020, July 8-11, 2020, Saarbr{\"{u}}cken, Germany (Virtual Conference)},
  volume 168 of {\em LIPIcs}, pages 101:1--101:14. Schloss Dagstuhl -
  Leibniz-Zentrum f{\"{u}}r Informatik, 2020.

\end{thebibliography}

\appendix

\section{Missing Proofs}

\subsection{Proof of \cref{lemma:repset}}
\label{proofs:repset}
\begin{proof}
Let $r_i = \rank(\M_i)$, and let $v_x^i \in \F^{r_i}$ for $x \in S_i$ be a representation for $\M_i$ over $\F^{r_i}$. Define the tensor product vector space $K = \F^{r_1} \otimes \F^{r_2} \otimes \dots \otimes \F^{r_p}$. For a set $\{x_1,\dots,x_p\} = X \in \J$, define $w(X) = v_{x_1}^1 \otimes v_{x_2}^2 \otimes \dots \otimes v_{x_p}^p \in K$. Now, let $\J^*$ be sets $X$ that form a maximal independent set of $\{w(X)\}_{X \in \J}$. Therefore, $|\J^*| \le \dim(K) = \prod_{i=1}^p r_i = \prod_{i=1}^p \rank(\M_i).$

It suffices to argue that $\J^*$ is a representative set for $\J$, containing sets $X_1, X_2, \dots, X_{|\J^*|}$. Consider the tensor product exterior algebra \[ \bigwedge \defeq \bigwedge(\F^{r_1}) \otimes \bigwedge(\F^{r_2}) \otimes \dots \otimes \bigwedge(\F^{r_p}), \] where the bilinear wedge product is defined component wise. Now, for an independent set $\{Y_1, \dots, Y_p \} = Y \in \I$, where each $Y_i \in \I_i$, define
\[ w(Y) \defeq (\wedge_{y \in Y_1} v_y^1) \otimes \dots \otimes (\wedge_{y \in Y_p} v_y^p). \]
Given this, note that $X \sqcup Y \in \I$ if and only if $w(X) \wedge w(Y) \neq 0$. Because $\J^*$ forms a maximal independent set, there are constants $c_i \in \F$ such that $w(X) = \sum_{i=1}^{|\J^*|} c_i w(X_i)$. Therefore, $w(X) \wedge w(Y) = \sum_{i=1}^{|\J^*|} c_i (w(X_i) \wedge w(Y))$, so there exists an $i$ such that $w(X_i) \wedge w(Y) \neq 0$, as desired.

To analyze the runtime, we may assume that $|\J| \ge \prod_{i=1}^p r_i = \dim(K)$, or we are done. The only step is to perform linear algebra over $K$, which costs $(\dim(K) + |\J|)^{O(1)} = |\J|^{O(1)}$ time, by the condition on $|\J|$.
\end{proof}

\end{document}